\newtheorem{rem}{Remark}
\newtheorem{prop}{Proposition}
\DeclarePairedDelimiter{\ceil}{\lceil}{\rceil}
\definecolor{orange}{RGB}{255,107,0}
\definecolor{green}{RGB}{0,160,20}
\begin{document}

\title{A Low-Complexity Placement Design of Pinching-Antenna Systems}
\author{Ximing Xie,~\IEEEmembership{Member,~IEEE}, Fang Fang,~\IEEEmembership{Senior Member,~IEEE}, Zhiguo Ding,~\IEEEmembership{Fellew,~IEEE}, \\ and Xianbin Wang,~\IEEEmembership{Fellew,~IEEE}

\thanks{Ximing Xie, Fang Fang and Xianbin Wang are with the Department of Electrical and Computer Engineering, and Fang Fang is also with the Department of Computer Science, Western University, London, ON N6A 3K7, Canada (e-mail: \{xxie269, fang.fang, xianbin.wang\}@uwo.ca).}
\thanks{Zhiguo Ding is with Khalifa University, Abu Dhabi, UAE, and the University of Manchester, Manchester, M1 9BB, UK. (e-mail:zhiguo.ding@ieee.org).}
\vspace{-0.5cm}}
\maketitle

\begin{abstract}
 Pinching-antenna systems have recently been proposed as a new candidate for flexible-antenna systems, not only inheriting the reconfiguration capability but also offering a unique feature: establishing line-of-sight links to mitigate large-scale path loss. However, sophisticated optimization of the placement of pinching antennas has very high complexity, which is challenging for practical implementation. This paper proposes a low-complexity placement design, providing the closed-form expression of the placement of pinching antennas, to maximize the sum rate of multiple downlink users. Orthogonal multiple access (OMA) and non-orthogonal multiple access (NOMA) are both investigated when the pinching-antenna system is only equipped with a single antenna and only the OMA case is studied when there are multiple antennas equipped by the pinching-antenna system. Simulation results indicate pinching-antenna systems can outperform conventional fixed-antenna systems and are more suitable for large service areas.
\end{abstract}

\begin{IEEEkeywords}
Pinching antennas, placement design, low complexity
\end{IEEEkeywords}

\section{Introduction} 
Flexible-antenna systems, e.g., reconfigurable intelligent surfaces, movable antennas, and stacked intelligent metasurfaces, have gained significant attention due to their ability to reconfigure wireless channels \cite{xie2021joint, zhu2023modeling, an2023stacked}. Flexible-antenna systems have been shown to outperform conventional fixed-antenna systems. However, the locations of the aforementioned flexible-antenna systems are usually fixed or limited within the wavelength scale, which have difficulties to combat large-scale path loss especially when the line of sight (LoS) link is unavailable \cite{ding2024flexible}. To address this challenge, pinching-antenna systems were first proposed by a demonstration carried out by DOCOMO in 2022 \cite{suzuki2022pinching}. As shown in Fig. \ref{system model}, the low-cost dielectric materials such as plastic pinches can be dynamically deployed on a long dielectric waveguide to create increased number of radiation points. By adjusting the placement of pinches, new LoS links can be established for users who previously lacked direct LoS connectivity. Therefore, the pinching-antenna system can realize the capability of reconfiguring wireless channels by adjusting the placement of pinching antennas on the dielectric waveguide. As a result, pinching-antenna systems are expected to be a groundbreaking revolution in wireless communications, especially for the Millimeter Wave (mmWave) and Terahertz (THz) scenarios, where lacing of LoS channels might be a major issue. \par

Wireless channel reconfiguration is the most important capability of flexible-antenna systems. Since the placement of pinching antennas on the dielectric waveguide directly affects wireless channels, the placement optimization of pinching antennas is crucial to achieve higher performance gain. In \cite{ding2024flexible}, a pinching-antenna system assisted downlink was designed considering both orthogonal multiple access (OMA) and non-orthogonal multiple access (NOMA). As an initial research, \cite{ding2024flexible} heuristically discussed how to determine the placement of pinching antennas and also pointed out that sophisticated optimization of antenna placements is required to achieve optimal performance. However, sophisticated optimization causes high computational complexity, which is challenging for practical implementation. \par

Motivated by this, this paper focuses on a low-complexity placement design, which provides the closed-form solutions for the placement of pinching antennas based on the given user locations. In particular, a sum rate maximization problem is investigated in three different scenarios, i.e., single pinching antenna assisted OMA networks, multiple pinching antennas assisted OMA networks, and single pinching antenna assisted NOMA networks. By solving the sum rate maximization problem, the closed-form solutions of the pinching antenna's placement are derived for three scenarios. It is worth pointing out that the case of multiple pinching antennas assisted NOMA networks is too complicated to derive a closed-form, hence, a sophisticated optimization algorithm is required, which is beyond the low-complexity scope of this paper and will be investigated in the future work. Simulation results indicate the considered pinching-antenna system yields higher sum rate compared to conventional fixed-antenna systems in all scenarios.  
\begin{figure}[t]
     \centering
     \includegraphics[width=0.44\textwidth]{./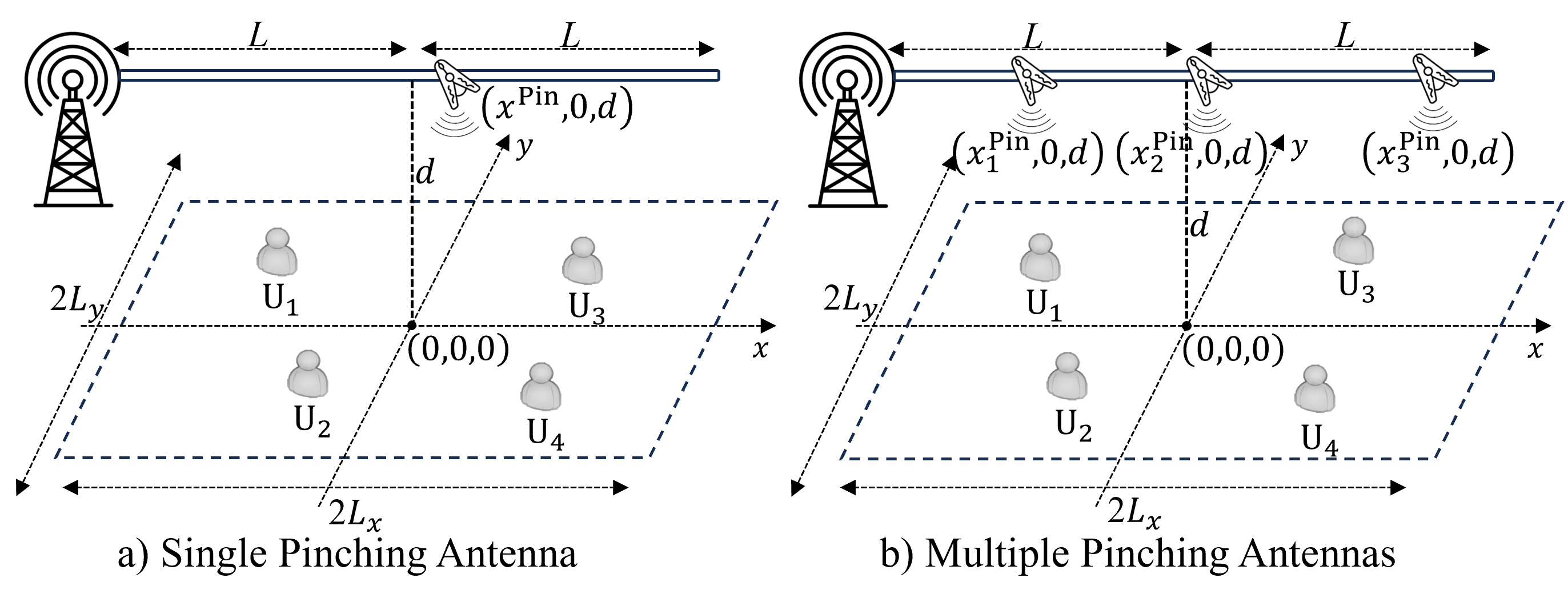}
     \caption{Illustration of two pinching-antenna systems: a) single pinching antenna on a waveguide and b) multiple pinching antennas on a waveguide.}
     \label{system model}
     \vspace{-0.7cm}
\end{figure}
\section{Pinching Antenna assisted OMA Networks}
In this section, time-division multiple access (TDMA) is considered as the multiple access technique. Since only one user is served during each time slot, the placement of pinching antennas is only related to one user's location in each time slot.
\subsection{A Single Pinching Antenna on a Waveguide}
Consider a TDMA-based downlink communication network, where the BS serves $M$ signal antenna users using a pinching antenna on a single waveguide. The $m$-th user is denoted by ${\rm U}_m$. To illustrate the positional relationship, we establish a three-dimensional Cartesian coordinate system. Without loss of generality, we assume that the waveguide has a length of $2L_x$ and is positioned parallel to the $x$-axis at the height of $d$. Meanwhile, $M$ users are randomly distributed within a rectangular service area located in the $x-y$ plane with side lengths of $2L_x$ and $2L_y$. In the TDMA system, it is assumed that there are $M$ time slots, and user ${\rm U}_m$ is served during the $m$-th time slot. The pinching antenna can be repositioned to different locations to serve distinct users in different time slots. Let $\psi_m^{\rm Pin} = (x_m^{\rm Pin}, 0, d)$ represent the location of the pinching antenna during the $m$-th time slot. Let $\psi_0 = (0,0,d)$ and $\psi_m = (x_m, y_m, 0)$ denote the locations of the center of the waveguide and ${\rm U}_m$, respectively. Therefore, ${\rm U}_m$'s data rate can be expressed as follows:
\begin{equation}
    R_m = \frac{1}{M} \log_2 \left(1+\frac{\eta P_m}{\left| \psi_m - \psi^{\rm Pin}_m  \right|^2 \sigma^2} \right), \label{data rate single OMA}
\end{equation}
where $\eta = \frac{c^2}{16 \pi^2 f_c^2}$, and where $c$ denotes the speed of light, $f_c$ denotes the carrier frequency, $P_m$ denotes the transmission power of ${\rm U}_m$'s signal and $\sigma^2$ denotes the power of additive white Gaussian noise (AWGN). It is assumed that the pinching antenna can be moved to a specific location on the waveguide in a short period of time, hence, the time taken for antenna movement is neglected. Then, the sum rate maximization problem can be formulated as:

\begin{subequations}\label{Prob1}
    \begin{align}
        {\rm P1}: &\max_{\{P_m, \psi_m^{\rm Pin}, \forall m\}}\;\; \sum\limits_{m=1}^M R_m \label{P10}\\
		&~\mathrm{s.t.}\qquad 0 \leq P_m \leq P_{max}, \forall m \label{P11}\\
            & \qquad \quad\; -L_x \leq x_m^{\rm Pin} \leq L_x, \forall m. \label{P12}
    \end{align}
\end{subequations}
Note that $R_m$ reaches to its optimal value when the transmission power $P_m$ is maximized, and the distance between ${\rm U}_m$ and the pinching antenna $\lvert \psi_m - \psi^{\rm Pin}_m  \rvert^2$ is minimized. As a result, the optimal transmission power is $P_m^* = P_{max}, \forall m$ and the optimal location of the pinching antenna is $\psi^{*{\rm Pin}}_m = (x_m,0,d), \forall m$.  
\subsection{Multiple Pinching Antennas on a Waveguide}
Assume there are $N$ active pinching antennas on a waveguide and the location of the $n$-th pinching antenna in the $m$-th time slot is denoted by $\psi_{n,m}^{\rm Pin}$. Since $N$ pinching antennas are on the same waveguide, they can be treated as a conventional linear array antennas. By adopting the spherical wave channel model, the channel vector of ${\rm U}_m$ is expressed as follows:
\begin{equation}
    \mathbf{h}_m = \left[\frac{\eta^{\frac{1}{2}} e^{-j \frac{2\pi}{\lambda} |\psi_m - \psi_{1,m}^{\rm Pin}|}}{|\psi_m - \psi_{1,m}^{\rm Pin}|} \; \cdots \; \frac{\eta^{\frac{1}{2}} e^{-j \frac{2\pi}{\lambda} |\psi_m - \psi_{N,m}^{\rm Pin}|}}{|\psi_m - \psi_{N,m}^{\rm Pin}|} \right]^T, \label{channel vector multi OMA}
\end{equation}
where $\lambda = \frac{c}{f_c}$ denotes wavelength of the carrier. According to \cite{ding2024flexible}, the signal vector $\mathbf{s}_m$ of ${\rm U}_m$ can be expressed as follows:
\begin{equation}
    \mathbf{s}_m = \sqrt{\frac{P_m}{N}} \left[e^{-j \theta_{1,m}} \; \cdots \; e^{-j \theta_{N,m}} \right]^T s_m,  \label{signal vector multi OMA}
\end{equation}
where $\theta_{n,m} = 2\pi \frac{|\psi_0^{\rm Pin} - \psi_{n.m}^{\rm Pin}|}{\lambda_g}$ represents the phase shift experienced by the signal at the $n$-th pinching antenna during the $m$-th time slot. $\psi_0^{\rm Pin}$ denotes the location of the feed point of the waveguide, and $\lambda_g$ denotes the waveguide wavelength in a dielectric waveguide. The waveguide wavelength $\lambda_g$ can be calculated as $\lambda_g = \frac{\lambda}{n_{\rm eff}}$, where $n_{\rm eff}$ is the effective refractive index of the dielectric waveguide. Therefore, ${\rm U}_m$'s received singal can be expressed as follows:
\begin{equation}
    \begin{split}
        &y_m = \mathbf{h}_m^H \mathbf{s}_m + w_m \\
            &= \left(\sum\limits_{n=1}^N \frac{\eta^{\frac{1}{2}} e^{-j\frac{2\pi}{\lambda} \left|\psi_m - \psi_{n,m}^{\rm Pin}\right|}}{\left|\psi_m - \psi_{n,m}^{\rm Pin}\right|} e^{-j \theta_{n,m}}\right) \sqrt{\frac{P_m}{N}} s_m + w_m, \label{received signal multi OMA}
    \end{split}
\end{equation}
where $w_m$ denotes the AWGN at ${\rm U}_m$. By assuming only ${\rm U}_m$ is served in the $m$-th time slot and the average power of $s_m$ is 1, ${\rm U}_m$'s data rate can be expressed as follows:
\begin{equation}
    \begin{split}
        &R_m =  \\
        &\frac{1}{M}\log_2\left(1 + \frac{P_m}{N\sigma^2}\left| \sum\limits_{n=1}^N \frac{\eta^{\frac{1}{2}} e^{-j\frac{2\pi}{\lambda} |\psi_m - \psi_{n,m}^{\rm Pin}|}}{|\psi_m - \psi_{n,m}^{\rm Pin}|} e^{-j \theta_{n,m}} \right|^2 \right). \label{data rate multi OMA}
    \end{split}
\end{equation}
Then, the sum rate maximization problem can be formulated as follows:
\begin{subequations}\label{Prob2}
    \begin{align}
        {\rm P2}: &\max_{\{P_m, \psi_{n,m}^{\rm Pin}, \forall n,m\}}\;\; \sum\limits_{m=1}^M R_m \label{P20}\\
		&~\mathrm{s.t.}\qquad 0 \leq P_m \leq P_{max}, \forall m \label{P21}\\
            & \qquad \quad\; -L_x \leq x_{n,m}^{\rm Pin} \leq L_x, \forall n,m. \label{P12}
    \end{align}
\end{subequations}
Note that $R_m$ increases monotonically with $P_m$, hence, the optimal transmission power is $P_m^* = P_{max}, \forall m$. \par

Recall TDMA, each user is served individually, hence, we can optimize each $R_m$ individually. By applying Cauchy-Schwarz inequality, the upper bound of \eqref{data rate multi OMA} can be expressed as \eqref{data rate upper bound multi OMA} on the bottom of this page. Note that $R_m$ can reach its upper bound when 
\begin{equation}
    \frac{2\pi}{\lambda}\left|\psi_m - \psi_{n,m}^{\rm Pin}\right| + \theta_{n,m} = 2k_n\pi, k_n \in \mathbf{N}, \forall n \label{upper bound condition multi OMA}
\end{equation}
holds. Note that ${\rm U}_m$'s data rate is increasing when the pinching antenna is closer to ${\rm U}_m$. To find the optimal $\psi_{n,m}^{\rm Pin}$, we formulate the following optimization problem:
\begin{strip}
    \vspace{-0.85cm}
    \begin{equation}
    R_m \leq \frac{1}{M} \log_2 \left(1 + \frac{P_m}{N\sigma^2} \sum\limits_{n=1}^N\frac{\eta}{ \left|\psi_m - \psi_{n,m}^{\rm Pin}\right|^2}\sum\limits_{n=1}^N \left|e^{-j\left(\frac{2\pi}{\lambda}\left|\psi_m - \psi_{n,m}^{\rm Pin}\right|+\theta_{n,m}\right)} \right|^2\right) \leq \frac{1}{M} \log_2 \left(1 + \frac{ P_m}{\sigma^2} \sum\limits_{n=1}^N\frac{\eta}{ \left|\psi_m - \psi_{n,m}^{\rm Pin}\right|^2}\right). \label{data rate upper bound multi OMA}
    \end{equation}
    \vspace{-0.5cm}
\end{strip}

\begin{subequations}\label{Prob3}
    \begin{align}
        {\rm P3}: &\min_{\{\psi_{n,m}^{\rm Pin}\}}\;\; \left|\psi_m - \psi_{n,m}^{\rm Pin}\right| \label{P30}\\
		&~\mathrm{s.t.}\qquad \frac{2\pi}{\lambda}\left|\psi_m - \psi_{n,m}^{\rm Pin}\right| + \theta_{n,m} = 2k_n\pi, k_n \in \mathbf{N} \label{P31}\\
            & \qquad \quad\; -L_x \leq x_{n,m}^{\rm Pin} \leq L_x. \label{P32}
    \end{align}
\end{subequations}

Let $\Tilde{\psi}_m^{\rm Pin}$ denote the closet position on the waveguide corresponding to $\psi_m$. Once $\psi_m = (x_m, y_m, 0)$ is determined, $\Tilde{\psi}_m^{\rm Pin} = (x_m, 0, d)$ is determined as well. However, $\psi_{n,m}^{\rm Pin} = \Tilde{\psi}_m^{\rm Pin}$ may not satisfy constraint \eqref{P31}, hence, we introduce an offset $\Delta_{n,m}^{\rm off}$ to describe the distance between $\psi_{n,m}^{\rm Pin}$ and $\Tilde{\psi}_m^{\rm Pin}$. As a result, $\psi_{n,m}^{\rm Pin} = (x_m + \Delta_{n,m}^{\rm off}, 0, d)$. The aim is to make $\psi_{n,m}^{\rm Pin}$ closer to $\Tilde{\psi}_m^{\rm Pin}$ as much as possible meanwhile satisfying constraint \eqref{P31}. Then, ${\rm P3}$ can be recast into
\begin{subequations}\label{Prob4}
    \begin{align}
        {\rm P4}: &\min_{\{\Delta_{n,m}^{\rm off}\}}\;\; \left|\Delta_{n,m}^{\rm off}\right| \label{P40}\\
		&~\mathrm{s.t.}\qquad \left(\left(\Delta_{n,m}^{{\rm off}}\right)^2 + y_m^2 + d^2 \right)^{\frac{1}{2}} \notag \\
            & \qquad \quad + n_{\rm eff}(L + x_m + \Delta_{n,m}^{\rm off})  = k_n\lambda, k_n \in \mathbf{N} \label{P41} \\
            & \qquad \quad -L_x - x_m \leq \Delta_{n,m}^{\rm off} \leq L_x - x_m \label{P42}.
    \end{align}
\end{subequations}
To solve ${\rm P4}$, we define the function $f(x) = (x^2 + D_1)^{\frac{1}{2}} + n_{\rm eff} (x + D_2)$, where $D_1 = y_m^2 + d^2$ and $D_2 = L + x_m$. We assume that the offset is non-negative (i.e. the pinching antenna always offsets to the direction opposite to the the feed point of the waveguide). In order to find the minimal $\Delta_{n,m}^{\rm off}$, we need to find the minimal $k_n$. Note that $f(x)$ is monotonically increasing with $x$ when $x \geq 0$, hence, we have $k\lambda = f(x) \geq f(0)$. Let $\delta = {\rm mod}(f(0),\lambda)$, where ${\rm mod}(a,b)$ denotes the modulo operation of $a$ by $b$. The minimal $k_n^*$ can be expressed as follows:
\begin{equation}
    k_n^* = \frac{f(0) - \delta}{\lambda} + \ceil[\Big]{\frac{\delta}{\lambda}} + (n-1) \label{optimal k multi OMA},
\end{equation}
where $\lceil a \rceil$ denotes the round up operation. Then, the minimal $\Delta_{n,m}^{*\rm off}$ can be obtained by solving the equation 
\begin{equation}
    \left(\left(\Delta_{n,m}^{*{\rm off}}\right)^2 + D_1\right)^{\frac{1}{2}} + n_{\rm eff}\left(D_2 + \Delta_{n,m}^{*\rm off}\right)  = k^*_n\lambda. \label{optimal equation multi OMA}
\end{equation}
\begin{prop}
    Equation \eqref{optimal equation multi OMA} always has a positive real solution.
\end{prop}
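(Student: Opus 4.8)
The plan is to recognize \eqref{optimal equation multi OMA} as the scalar equation $f(x) = k_n^*\lambda$ for the function $f(x) = (x^2 + D_1)^{\frac{1}{2}} + n_{\rm eff}(x + D_2)$ already introduced before \eqref{Prob4}, with unknown $x = \Delta_{n,m}^{*\rm off}$, and to prove solvability by combining the intermediate value theorem with the strict monotonicity of $f$; the construction of $k_n^*$ in \eqref{optimal k multi OMA} then pins the solution down to the nonnegative (indeed positive) half-line, which is exactly what \eqref{P40}--\eqref{P42} asks for.

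First I would establish the regularity of $f$. Since $D_1 = y_m^2 + d^2 > 0$, the term $(x^2 + D_1)^{\frac{1}{2}}$ is real and smooth for every real $x$, so $f$ is continuous on $\mathbb{R}$, in particular on $[0,\infty)$. Its derivative is $f'(x) = x/(x^2+D_1)^{\frac{1}{2}} + n_{\rm eff}$; because $\lvert x/(x^2+D_1)^{\frac{1}{2}}\rvert < 1 < n_{\rm eff}$, we have $f'(x) > 0$ for all $x$, so $f$ is strictly increasing (and on $[0,\infty)$ one even has $f'(x) \ge n_{\rm eff}$). Next I would locate $k_n^*\lambda$ inside the range of $f$. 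Evaluating the endpoints, $f(0) = \sqrt{D_1} + n_{\rm eff}D_2$ is finite, while $f(x) \ge n_{\rm eff}x \to +\infty$ as $x \to \infty$, so $f\big([0,\infty)\big) = [f(0),\infty)$. From \eqref{optimal k multi OMA}, $k_n^*\lambda = f(0) - \delta + \lceil \delta/\lambda \rceil \lambda + (n-1)\lambda$, and since $\delta = {\rm mod}(f(0),\lambda) \in [0,\lambda)$ gives $\lceil \delta/\lambda\rceil\lambda \ge \delta$ and $(n-1)\lambda \ge 0$, we conclude $k_n^*\lambda \ge f(0)$, i.e. $k_n^*\lambda$ belongs to $f\big([0,\infty)\big)$.

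With these two facts the conclusion follows: choosing any $X$ with $f(X) > k_n^*\lambda$ and applying the intermediate value theorem to the continuous $f$ on $[0,X]$ produces $x^* \in [0,X]$ with $f(x^*) = k_n^*\lambda$, and strict monotonicity makes $x^*$ the unique real root, so $\Delta_{n,m}^{*\rm off} = x^*$ is the well-defined solution of \eqref{optimal equation multi OMA}. For positivity, note $x^* = 0$ would force $k_n^*\lambda = f(0)$, which by the identity above can happen only if simultaneously $\delta = 0$ and $n = 1$; in every other case $k_n^*\lambda > f(0)$ and monotonicity yields $x^* > 0$, while in that single degenerate ``aligned'' case the offset is simply $0$, still a valid nonnegative solution consistent with the modeling assumption.

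I do not anticipate a deep obstacle here: the statement is essentially a repackaging of the intermediate value theorem plus monotonicity. The only points requiring care are (i) verifying that $f$ is genuinely real-valued and strictly increasing, which rests on $D_1 > 0$ and $n_{\rm eff} > 1$, and (ii) being precise about ``positive'' versus ``nonnegative,'' i.e. checking that the round-up operation in \eqref{optimal k multi OMA} makes $k_n^*\lambda$ strictly exceed $f(0)$ outside the measure-zero aligned case. If one additionally wanted the root to satisfy the box constraint \eqref{P42}, that would be the genuinely nontrivial addition, but the proposition as stated does not require it.
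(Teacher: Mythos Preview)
Your argument is correct and follows essentially the same route as the paper: define (implicitly) $g(x)=f(x)-k_n^*\lambda$, verify $g(0)\le 0$ from the construction of $k_n^*$ in \eqref{optimal k multi OMA}, observe that $g$ eventually becomes positive (the paper evaluates at $x=k_n^*\lambda$ using $n_{\rm eff}>1$, while you let $x\to\infty$), and apply the intermediate value theorem. Your added observations on strict monotonicity (hence uniqueness) and on the $\delta=0,\ n=1$ boundary case are correct refinements that the paper's proof does not spell out.
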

\begin{proof}
    To prove this proposition, we need to define a new function $g(x) =  \left(x^2 + D_1\right)^{\frac{1}{2}} + n_{\rm eff}\left(D_2 +x\right)  - k^*_n\lambda$. It is noted that $g(x)$ is continuous on the closed interval $[0, k_n^*\lambda]$. We have
    \begin{equation}
        \begin{split}
            g(0) &= D_1^{\frac{1}{2}} + n_{\rm eff}D_2 - k_n^*\lambda\\
                &\overset{(a)}{=} D_1^{\frac{1}{2}} + n_{\rm eff}D_2 - D_1^{\frac{1}{2}} +  n_{\rm eff}D_2 + \delta -  \ceil[\Big]{\frac{\delta}{\lambda}} \lambda - (n-1)\lambda\\
                &= \delta -  \ceil[\Big]{\frac{\delta}{\lambda}} \lambda - (n-1)\lambda \overset{(b)}{\leq} 0, \label{g(0)}
        \end{split}
    \end{equation}
where $(a)$ is obtained by substituting $k_n^*$ with \eqref{optimal k multi OMA} and $(b)$ can be proved by the fact that $\delta < \lambda$. We further have 
\begin{equation}
    g( k_n^*\lambda) =  \left((k_n^*\lambda))^2 + D_1\right)^{\frac{1}{2}} + n_{\rm eff}\left(D_2 +k_n^*\lambda\right)  - k^*_n \lambda \overset{(c)}{>} 0,
\end{equation}
where $(c)$ is due to $n_{\rm eff}$ is usually greater than one. Therefore, $g(0)g(k_n^*\lambda) \leq 0$. According to intermediate value theorem, $g(x)$ exists at least one real root in $[0, k_n^*\lambda]$. The proposition is proved. 
\end{proof}
\begin{prop}
    $\Delta_{n,m}^{*{\rm off}}$ is always less than $n\lambda$.
\end{prop}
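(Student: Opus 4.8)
The plan is to exploit the strict monotonicity of $f$ that was already used to define $k_n^*$, and to sandwich the value $k_n^*\lambda$ between $f(0)$ and $f(n\lambda)$. Recall from the construction of \eqref{optimal equation multi OMA} that $\Delta_{n,m}^{*{\rm off}}$ is the nonnegative root of $f\big(\Delta_{n,m}^{*{\rm off}}\big)=k_n^*\lambda$, where $f(x)=(x^2+D_1)^{\frac12}+n_{\rm eff}(x+D_2)$ is strictly increasing on $[0,\infty)$. Hence it suffices to establish the two inequalities $k_n^*\lambda < f(0)+n\lambda$ and $f(0)+n\lambda \le f(n\lambda)$; chaining them with the monotonicity of $f$ then forces $\Delta_{n,m}^{*{\rm off}}<n\lambda$.

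For the first inequality I would substitute the closed form \eqref{optimal k multi OMA}. Since $\delta={\rm mod}(f(0),\lambda)$, the quantity $\tfrac{f(0)-\delta}{\lambda}$ is an integer, so $k_n^*\lambda = f(0)-\delta+\ceil{\delta/\lambda}\,\lambda+(n-1)\lambda$. Because $0\le\delta<\lambda$ we have $\ceil{\delta/\lambda}\in\{0,1\}$, and in either case $\ceil{\delta/\lambda}\,\lambda-\delta<\lambda$ (it is $0$ when $\delta=0$ and $\lambda-\delta$ when $\delta>0$). Adding $(n-1)\lambda$ yields $k_n^*\lambda<f(0)+n\lambda$.

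For the second inequality I would simply evaluate $f(n\lambda)-f(0)=\big((n\lambda)^2+D_1\big)^{\frac12}-D_1^{\frac12}+n\,n_{\rm eff}\lambda$. The first two terms are nonnegative, and since $n_{\rm eff}>1$ (the same fact invoked in step $(c)$ of the previous proof) the last term alone already exceeds $n\lambda$. Therefore $f(n\lambda)-f(0)>n\lambda$, i.e. $f(n\lambda)>f(0)+n\lambda$.

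Combining, $f\big(\Delta_{n,m}^{*{\rm off}}\big)=k_n^*\lambda<f(0)+n\lambda<f(n\lambda)$, and strict monotonicity of $f$ on $[0,\infty)$ gives $\Delta_{n,m}^{*{\rm off}}<n\lambda$, which is the claim. I do not expect a genuine obstacle here; the only point needing care is the bookkeeping with the modulo and ceiling terms in \eqref{optimal k multi OMA} — in particular verifying $\ceil{\delta/\lambda}\,\lambda-\delta<\lambda$ in both the $\delta=0$ and $\delta>0$ cases — together with the standing assumption $n_{\rm eff}>1$, without which the bound could fail.
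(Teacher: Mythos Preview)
Your proposal is correct and follows essentially the same route as the paper: both arguments compute $f(n\lambda)$ versus $k_n^*\lambda=f(\Delta_{n,m}^{*{\rm off}})$ using the closed form \eqref{optimal k multi OMA} and the assumption $n_{\rm eff}>1$, then appeal to the monotonicity of $f$. Your two-step split $k_n^*\lambda<f(0)+n\lambda<f(n\lambda)$ is a slightly tidier packaging and handles the $\delta=0$ edge case that the paper's substitution glosses over, but the underlying computation is the same.
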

\begin{proof}
    According to \eqref{optimal equation multi OMA}, we have $f\left(\Delta_{n,m}^{\rm off}\right) = k_n^*\lambda$. By substituting $k_n^*$ with \eqref{optimal k multi OMA}, we further have $f\left(\Delta_{n,m}^{\rm off}\right) = D_1^{\frac{1}{2}} + n_{\rm eff}D_2 - \delta + n\lambda$. Hence, $$f(n\lambda) - f\left(\Delta_{n,m}^{\rm off}\right) = (n^2\lambda^2 + D_1)^{\frac{1}{2}} - D_1^{\frac{1}{2}} + (n_{\rm eff} - 1)n\lambda + \delta$$ is greater than zero. As a result, $f(n\lambda) > f\left(\Delta_{n,m}^{\rm off}\right)$. Recall $f(x)$ is monotonically increasing with $x$ when $x \geq 0$, the proposition is proved. \label{prop2}
\end{proof}
According to Proposition 2, the offset of pinching antennas is within a few wavelength, which has a negligible impact on the distance $\left|\psi_m - \psi_{n,m}^{\rm Pin}\right|$, particularly when $f_c$ is very large. Therefore, by assuming $\frac{\left|\psi_m - \Tilde{\psi}_{m}^{\rm Pin}\right|}{\left|\psi_m - \psi_{n,m}^{\rm Pin}\right|} \approx 1, \forall n.$, the data rate of ${\rm U}_m$ can be simplified as follows:
\begin{equation}
    R_m \approx  \frac{1}{M} \log_2 \left(1 + \frac{N P_m \eta}{\sigma^2 \left|\psi_m - \Tilde{\psi}_{m}^{\rm Pin}\right|^2}\right).
\end{equation}
The location of the $n-$th pinching antenna in the $m-$th time slot can be obtained as
\begin{equation}
    \psi_{n,m}^{\rm Pin} = \Tilde{\psi}_m^{\rm Pin} + \Delta_{n,m}^{* {\rm off}}.
\end{equation}
\begin{rem}
    It is noted that equation \eqref{optimal equation multi OMA} can be reformulated as a quadratic polynomial. Consequently, the quadratic formula can be utilized to determine $\Delta_{n,m}^{* {\rm off}}$ efficiently with low-complexity.
\end{rem}
\section{Pinching Antenna assisted NOMA Networks}
A  characteristic of pinching-antenna systems is that the pinching antenna operating on a specific waveguide is required to transmit the same signal, which inspires the use of NOMA \cite{ding2024flexible}. Since all users are served simultaneously in NOMA networks, it is impractical to precisely adjust the position of the antennas to exclusively serve a single user. If we assume all users' locations to be fixed during one time period, an efficient solution is to place the pinching antenna at a preset position.
\vspace{-0.5cm}
\subsection{A Single Pinching Antenna on a Waveguide}
Since multiple users are served simultaneously in NOMA networks, the transmitted signal must be superimposed to accommodate the signals of all users. Let $s_m$ denote ${\rm U}_m$'s signal and $p_m$ denote the transmission power of $s_m$. The superimposed signal is given by $s = \sum_{m=1}^M \sqrt{p_m}s_m$. It is noted that the channel gain is determined by the location of the user and the position of the pinching antenna on the waveguide. Therefore, the channel gain of ${\rm U}_m$ is given by $h_m = \frac{\sqrt{\eta}}{\left| \psi_m - \psi^{\rm Pin}  \right|}$, where $\psi^{\rm Pin}$ denotes the position of the pinching antenna on the waveguide. According to the principle of power-domain NOMA, the user with the strong channel gain needs to decode the signals of the users with weaker channel gains. For example, if the channel gains are sorted as $|h_1|^2 \leq \cdots \leq |h_M|^2$, ${\rm U}_m$ needs to decode ${\rm U}_j$'s signal first, $1\leq j \leq m-1$ before decoding its own signal. According to the principle of SIC, users with stronger channel gains than ${\rm U}_m$ introduce interference during the decoding of  ${\rm U}_m$'s signal. Given the ordering of the channel gains for $M$ users, denoted by $\chi$, the interference set of ${\rm U}_m$, represented as $\mathcal{I}_m^{\chi}$, comprises all indices of users whose signals are treated as interference during the decoding of ${\rm U}_m$'s signal. In addition, users belonging to $\mathcal{I}_m^{\chi}$ will decode ${\rm U}_m$'s signal. Then, the data rate of ${\rm U}_i$ to decode ${\rm U}_m$'s signal can be expressed as follows:
\begin{equation}
    R_{i,m} = \log_2 \left(1+ \frac{|h_i|^2 p_m}{\sum_{j \in \mathcal{I}_m^{\chi}} |h_i|^2 p_j + \sigma^2} \right), i \in \mathcal{I}_m^{\chi}. \label{data rate im single NOMA}
\end{equation}
The achievable data rate of ${\rm U}_m$ is given by
\begin{equation}
    R_m = \min \{R _{m,m}, R_{i,m}, i \in \mathcal{I}_m^{\chi}\}. \label{achievable data rate single NOMA }
\end{equation}
Then, the sum rate maximization problem can be formulated as follows:
\begin{subequations}\label{Prob5}
    \begin{align}
        {\rm P5}: &\max_{\{ \mathbf{p}, \psi^{\rm Pin}\}}\;\; \sum\limits_{m=1}^M R_m \label{P50}\\
		&~\mathrm{s.t.} \sum\limits_{m=1}^M p_m \leq P_{max}, \forall m \label{P51}\\
            & \qquad R_m \geq R_t, \forall m \label{P52} \\
            & \qquad -L_x \leq x_{n,m}^{\rm Pin} \leq L_x, \forall n,m, \label{P53}
    \end{align}
\end{subequations}
where $\mathbf{p}$ denotes the power vector collecting all $p_m, \forall m$, and $R_t$ denotes the minimal target data rate. Constraint \eqref{P51} ensures that the total transmission power will not exceed the maximum power budget and \eqref{P52} guarantees the successful execution of SIC. It is noted that $\psi^{\rm Pin}$ affects the channel gain ordering $\chi$ and further affects the interference set $\mathcal{I}_m^{\chi}$. Moreover, two optimization variables are  coupled together. As a result, ${\rm P5}$ is non-convex and it is difficult to be solved in polynomial time. \par
 To efficiently solve ${\rm P5}$, we propose a two-phase optimization design: 1) Placement optimization for the pinching antenna; 2) Power allocation for each user.
\subsubsection{Placement Optimization for the Pinching Antenna}
Recall that the pinching antenna has the ability to build LoS link, the general free-space path loss for the LoS link of the $m$-th user can be denoted as
\begin{equation}
    L_m^{\rm LoS} = 20\log\left(\frac{4\pi f_c |\psi_m - \psi^{\rm Pin}|}{c}\right). \label{path loss single NOMA}
\end{equation}
Instead of directly optimizing non-convex \eqref{P50}, we can minimize the total path loss to find a sub-optimal placement of the pinching antenna. For a given transmission power $p_m$ of the $m$-th user, the placement optimization problem can be formulated as follows:
\begin{subequations}\label{Prob6}
    \begin{align}
        {\rm P6}: &\min_{\{\psi^{\rm Pin}\}}\;\; \sum\limits_{m=1}^M L_m^{\rm LoS} \label{P60}\\
		&~\mathrm{s.t.} -L_x \leq x_{n,m}^{\rm Pin} \leq L_x, \forall n,m. \label{P61}
    \end{align}
\end{subequations}
We can further transfer \eqref{P60} as follows
\begin{equation}
        \begin{split}
            \sum\limits_{m=1}^M L_m^{\rm LoS} &= 20\log \left(\prod_{m=1}^M \left( \frac{4\pi f_c |\psi_m - \psi^{\rm Pin}|}{c} \right) \right)\\
                &\overset{(d)}{\leq} 10 M\log \left( \frac{16\pi^2 f_c^2 \sum_{m=1}^M |\psi_m - \psi^{\rm Pin}|^2}{Mc^2} \right), \label{total path loss single NOMA}
        \end{split}
    \end{equation}
where ${d}$ is obtained by arithmetic mean-geometric mean inequality. From \eqref{total path loss single NOMA}, it is noted that minimizing the total path loss is equivalent to minimizing the sum of the square of distance between the pinching antenna and users. By substituting the distance with coordinates, ${\rm P6}$ can be recast as follows:
\begin{subequations}\label{Prob7}
    \begin{align}
        {\rm P7}: &\min_{\{x^{\rm Pin}\}}\;\; \sum\limits_{m=1}^M (x^{\rm Pin} - x_m)^2 + d_m \label{P70}\\
		&~\mathrm{s.t.} \quad -L_x \leq x^{\rm Pin} \leq L_x, \label{P71}
    \end{align}
\end{subequations}
where $d_m = d^2 + y_m^2$. It is noted that ${\rm P7}$ is convex and the optimal solution is provided by the following proposition. 
\begin{prop}
    The optimal solution of ${\rm P7}$ is given by
    \begin{equation}
        x^{* {\rm Pin}} = \frac{\sum_{m=1}^M x_m}{M}. \label{optimal pin position single NOMA}
    \end{equation}
\end{prop}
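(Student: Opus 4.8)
The plan is to treat P7 as a one-dimensional strictly convex quadratic program and solve it by first-order optimality, then verify that the unconstrained minimizer is already feasible for the box constraint. First I would note that the objective $F(x^{\rm Pin}) = \sum_{m=1}^M \big[(x^{\rm Pin} - x_m)^2 + d_m\big]$ is a sum of strictly convex quadratics in the scalar $x^{\rm Pin}$, hence strictly convex, so P7 (a convex program over a closed interval) has a unique optimal solution. Computing the derivative, $F'(x^{\rm Pin}) = 2\sum_{m=1}^M (x^{\rm Pin} - x_m) = 2\big(M x^{\rm Pin} - \sum_{m=1}^M x_m\big)$, and setting it to zero gives the stationary point $x^{\rm Pin} = \frac{1}{M}\sum_{m=1}^M x_m$, which is the global minimizer of $F$ over $\mathbb{R}$.

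Second, I would check feasibility: since each user lies in the service area whose $x$-extent is $[-L_x, L_x]$, we have $-L_x \le x_m \le L_x$ for all $m$, and averaging these $M$ inequalities yields $-L_x \le \frac{1}{M}\sum_{m=1}^M x_m \le L_x$. Hence the unconstrained minimizer satisfies \eqref{P71}. Because the feasible set of P7 contains the global minimizer of a convex objective, that point is also the constrained optimum (equivalently, the KKT conditions hold with both box multipliers equal to zero), which establishes \eqref{optimal pin position single NOMA}.

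It may also be worth recording the logical link back to the design goal: by \eqref{total path loss single NOMA} the total path loss is bounded by a strictly increasing function of $\sum_{m=1}^M |\psi_m - \psi^{\rm Pin}|^2$, and writing $\psi^{\rm Pin} = (x^{\rm Pin}, 0, d)$ turns $|\psi_m - \psi^{\rm Pin}|^2 = (x^{\rm Pin} - x_m)^2 + d_m$ with $d_m = d^2 + y_m^2$ fixed, so minimizing that bound is exactly P7. The argument is essentially a single derivative computation; the only point needing a sentence of justification is the feasibility check, and that is immediate from convexity of the interval, so I do not anticipate any real obstacle.
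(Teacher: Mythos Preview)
Your proposal is correct and follows essentially the same approach as the paper: the paper simply notes the problem is convex and sets the first-order derivative of the objective to zero. Your version is actually more complete, since you also verify that the unconstrained minimizer lies in $[-L_x,L_x]$, a feasibility check the paper omits.
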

\begin{proof}
    Due to \eqref{P71} is convex, the optimal solution can be obtained by letting the first-order derivative of \eqref{P71} equal to $0$.
\end{proof}
\subsubsection{Power Allocation for Users}
Given the placement of the pinching antenna, the channels between the pinching antenna and users are also determined. As a result, the SIC decoding order can be further determined by sorting the channel gains. Without loss of generality, we assume the channel gains are sorted as $|h_1|^2 \leq \cdots \leq |h_M|^2$. Then, we have
\begin{equation}
    R_m = R_{m,m} = \log_2 \left(1+ \frac{|h_m|^2 p_m}{\sum_{j=m+1}^M |h_m|^2 p_j + \sigma^2} \right).
\end{equation}
According to the derivation of \cite{liu2019placement,wu2023two}, the optimal power allocation scheme is achieved to maximize the sum rate when each user is first allocated the minimum power required to meet its rate requirement and then remaining power is assigned to the user with the best channel condition. Hence, the optimal power allocation is given by
\begin{equation}
    \begin{cases}
        p^*_1 = \frac{2^{R_t}-1}{2^{R_t}} \left(P_{max} + \frac{\sigma^2}{|h_1|^2} \right),\\
        p^*_2 = \frac{2^{R_t}-1}{2^{R_t}} \left(P_{max} - p^*_1 + \frac{\sigma^2}{|h_2|^2} \right),\\
        \vdots \\
        p^*_{M-1} = \frac{2^{R_t}-1}{2^{R_t}} \left(P_{max} - \sum\limits_{m=1}^{M-2} p^*_{m} + \frac{\sigma^2}{|h_{M-1}|^2} \right),\\
        p^*_M = P_{max} -  \sum\limits_{m=1}^{M-1} p^*_{m}.
    \end{cases} \label{optimal power allocation single NOMA}
\end{equation}
As a result, the sum rate can be expressed as follows:
\begin{equation}
    \sum\limits_{m=1}^M R_m = (M-1)R_t + \log_2\left(1+\frac{|h_M|^2 p^*_M}{\sigma^2}\right). 
\end{equation}
\begin{rem}
If there are $N$ pinching antennas in NOMA scheme, the channel gain of the $m-$th user becomes $h_m = \sum\limits_{n=1}^N \frac{\eta^{\frac{1}{2}} e^{-j\frac{2\pi}{\lambda} \left|\psi_m - \psi_{n}^{\rm Pin}\right|}}{\left|\psi_m - \psi_{n}^{\rm Pin}\right|} e^{-j \theta_{n,m}}$ \cite{ding2024flexible}. As can be observed, each user’s channel gain contains sums of complex numbers, it is difficult to derive a closed-form of the placement of pinching antennas by considering user interference and SIC. As a result, a sophisticated optimization algorithm is required, which will be our further research direction.
\end{rem}
\vspace{-0.5cm}
\section{Simulation Results}
In this section, the simulation results demonstrate the superior performance of the pinching-antenna system compared to the conventional fixed-antenna system. The parameters are set as follows: $f_c = 28$ GHz, $d = 3$m, and $\sigma^2 = -90$ dBm. It is assumed that there are 4 users in the service area. In addition, the conventional fixed-antenna system is the benchmark, which is assumed to be deployed at $(0,0,d)$. 
\begin{figure}[t]
     \centering
     \includegraphics[width=0.4\textwidth]{./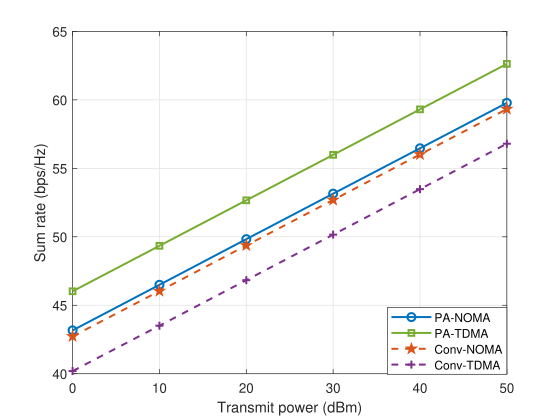}
     \caption{The transmit power versus the sum rate, where only one pinching antenna on the waveguide.}
     \label{singleantenna}
\vspace{-0.4cm}
\end{figure}

\begin{figure}[t]
     \centering
     \includegraphics[width=0.4\textwidth]{./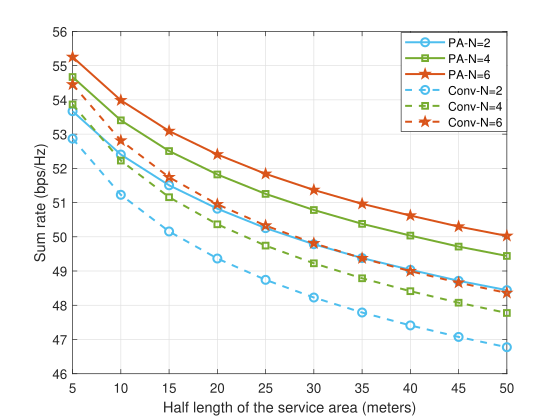}
     \caption{The service area versus the sum rate, where multiple pinching antennas, $N = 2,4,6$, on the waveguide.}
     \label{multiantennas}
\vspace{-0.5cm}
\end{figure}
Fig. \ref{singleantenna} shows the sum rate performance of the pinching-antenna system, where there is only one pinching antenna on a waveguide. The two half side lengths of the service area is set as $L_x = 60$m and $L_y = 5$m, respectively.  It is noted that the sum rate increases as the transmit power increases for all schemes. Two multiple access schemes are investigated, namely TDMA and NOMA. Fig. \ref{singleantenna} shows that the pinching-antenna system can achieve a higher sum rate than the conventional fixed-antenna system in both TDMA and NOMA schemes. However, as can be observed, the performance gain introduced by pinching-antenna systems in NOMA is not obvious as that in TDMA when single pinching antenna is deployed. The reason is that the pinching antenna can be moved to the closet position corresponding to a specific user in each time slot when TDMA is employed. Hence, the large-scale path loss is maximally mitigated. In contrast, the pinching antenna is deployed at a pre-designed position and will not change the position when NOMA is employed. As a result, the large-scale path loss decreases the channel quality of the users that are far away from the antenna. \par
Fig. \ref{multiantennas} shows the sum rate performance of the pinching-antenna system versus the service area, where there are multiple pinching antennas on a waveguide. The service area in this simulation is assumed to be a square with $L_x = L_y$. In this figure, only TDMA is investigated. It is noted that the sum rate increases as the transmit power increases for all schemes and the pinching-antenna system can still achieve a higher sum rate than the conventional fixed-antenna system when there are multiple antennas. In addition, as can be observed, the sum rate increases as the number of antennas increases. As a result, the sum rate performance can benefit from multiple antenna systems. 
\vspace{-0.4cm}
\section{Conclusion} 
The paper proposed a low-complexity placement design of pinching-antenna systems to maximize the sum rate of multiple downlink users. The investigation focused on both TDMA and NOMA schemes when the pinching-antenna system is equipped with a single antenna. Furthermore, the study was extended to scenarios where the pinching-antenna system is equipped with multiple antennas under the TDMA scheme. This paper derived the closed-form expression of the placement of pinching antennas to avoid solving complex optimization problems. Furthermore, the superiority introduced by pinching-antennas systems on sum rate performance compared with conventional fixed-antenna systems was verified by the simulation results.     
\vspace{-0.3cm}
\bibliographystyle{IEEEtran}
\bibliography{EEref}
\end{document}